\newtheorem{lemma}{Lemma}
\DeclareMathOperator{\Tr}{Tr}
\DeclareMathOperator{\argmin}{argmin}
\begin{document}

\title{Channel Scheduling for IoT Access with\\ Spatial Correlation}

\author{Prasoon Raghuwanshi, ~\IEEEmembership{Student Member,~IEEE},
Onel Luis Alcaraz López\orcidlink{0000-0003-1838-5183},~\IEEEmembership{Member,~IEEE},
Petar Popovski\orcidlink{0000-0001-6195-4797},~\IEEEmembership{Fellow,~IEEE},
Matti Latva-aho\orcidlink{0000-0002-6261-0969},~\IEEEmembership{Senior Member,~IEEE}
\thanks{Prasoon Raghuwanshi, Onel Luis Alcaraz López, and Matti Latva-aho are with the Centre for Wireless Communications, University of Oulu, $90570$, Oulu, Finland (e-mail: Prasoon.Raghuwanshi@oulu.fi; Onel.AlcarazLopez@oulu.fi; Matti.Latva-aho@oulu.fi).}
\thanks{Petar Popovski is with the Department of Electronic Systems,	Aalborg University, $9220$, Aalborg, Denmark (e-mail: petarp@es.aau.dk)}
\thanks{This research has been supported by the Research Council of Finland (former Academy of Finland) 6G Flagship Programme (Grant 346208), the Finnish Foundation for Technology Promotion, and the INDIFICORE project.}} 



\maketitle
\vspace{-4mm}
\begin{abstract}
Spatially correlated device activation is a typical feature of the Internet of Things (IoT). This motivates the development of channel scheduling (CS) methods that mitigate device collisions efficiently in such scenarios, which constitutes the scope of this work. Specifically, we present a quadratic program (QP) formulation for the CS problem considering the joint activation probabilities among devices. This formulation allows the devices to stochastically select the transmit channels, thus, leading to a soft-clustering approach. We prove that the optimal QP solution can only be attained when it is transformed into a hard-clustering problem, leading to a pure integer QP, which we transform into a pure integer linear program (PILP). We leverage the branch-and-cut (B$\&$C) algorithm to solve PILP optimally. Due to the high computational cost of B$\&$C, we resort to some sub-optimal clustering methods with low computational costs to tackle the clustering problem in CS. Our findings demonstrate that the CS strategy, sourced from B$\&$C, significantly outperforms those derived from sub-optimal clustering methods, even amidst increased device correlation.
\end{abstract}
\begin{IEEEkeywords}
Branch-and-Cut, Clustering, Random Access, Channel Scheduling.
\end{IEEEkeywords}
\vspace{-4mm}
\section{Introduction}
\IEEEPARstart{T}{he} Internet of Things (IoT) synergizes connectivity, automation, and data-driven insights to enable businesses to make informed decisions, reduce waste, and enhance productivity. Indeed, IoT networks will embrace a multitude of services, fostering seamless interaction among numerous devices like sensors and actuators \cite{10097471}. This makes low latency, massive connectivity support, and energy efficiency very desirable features for an IoT multiple-access scheduling protocol.

IoT devices are often deployed for tasks where data is transmitted only upon detecting a specific event, making communication in such IoT networks event-triggered. Such networks typically experience sporadic and bursty traffic involving short data packets, while requiring efficient random access (RA) mechanisms at the devices to contend for available channels in a distributed manner. Notably, a defining attribute of event-triggered networks is spatially correlated device activation. This phenomenon arises because devices close to an event epicenter are more likely to simultaneously detect and respond to the same event. Such correlation must be exploited by proper channel scheduling (CS) procedures for RA \cite{9458251}.

The research community has recently displayed a growing interest in utilizing (deep) reinforcement learning (DRL/RL) based CS algorithms for IoT with correlated device activation, e.g., in \cite{rech2021coordinated, chandak2022learning}.
Specifically, the authors in \cite{rech2021coordinated} approached the time slot selection task at each industrial IoT device as a Markov game and employed the linear reward-inaction RL algorithm \cite{4082268} to determine its equilibrium points. Nevertheless, such a proposal does not converge to a pure Nash equilibrium and is effective only under moderate traffic intensity.
Meanwhile, the authors in \cite{chandak2022learning} proposed a multi-armed bandit (MAB) agent at each device to determine its CS strategy autonomously. Over time, MABs train themselves distributively and generate a form of implicit coordination among CS strategies of the devices that contribute to successfully transmitting an alarm message. However, due to instability, only a single MAB can be trained at a time. This constraint significantly prolongs the training time and necessitates tight synchronization among devices.
Moreover, DRL/RL algorithms impose a significant computational burden on resource-constrained IoT devices. Consequently, CS at the base station (BS), leveraging spatially correlated device activation, holds greater promise for IoT networks. Indeed, after the IoT devices sense an event, they no longer need to run a channel scheduling algorithm but just exploit the scheduling decisions already made by the BS. Consequently, the overall latency of the RA procedure is reduced, as well as the energy consumption at the IoT device, thereby extending its battery life.

\begin{figure}[!t]
\centering
\includegraphics[width=0.9\linewidth]{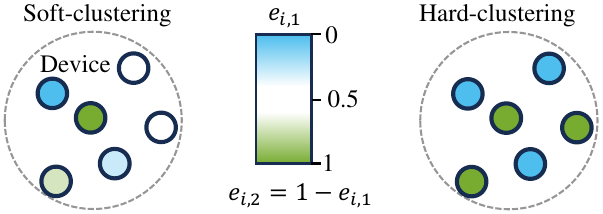}
\vspace{-2mm}
\caption{Illustration of the soft-/hard-clustering-based CS exemplified for the case of $6$ devices and $2$ channels. Here, $e_{i,1}$ and $e_{i,2}$ represent the probabilities of device $i$ choosing channel-$1$ and channel-$2$, respectively. As can be seen, devices with a high likelihood of simultaneous activation are opting for different transmission channels.}
\label{clustering_types}
\vspace{-4mm}
\end{figure}

The implementation of CS at the BS was studied in \cite{8445866, 9458251, lopez2022statistical}, where the authors leverage the correlated device activation to design a time slot/channel allocation protocol. 
The non-convex nature of the time slot allocation problem, aimed at maximizing throughput, coupled with the incorporation of higher-order correlation, prompted the authors in \cite{8445866} to derive the expressions for the lower/upper bounds on throughput using the inclusion-exclusion principle \cite{flajolet_sedgewick_2009}. Based on these bounds, fresh non-convex problems that tap into the pairwise correlation among devices are formulated. These problems are then solved using heuristic algorithms to derive a sub-optimal CS strategy with near-optimal throughput. 
Meanwhile, the time slot allocation problem in \cite{9458251} converges to a locally optimal solution when solved using the stochastic gradient ascent algorithm, whose performance is sensitive to the initialization of the time slot allocation matrix. Furthermore, the CS strategy in \cite{9458251} requires information about the device activity in each frame. 
On the other hand, a hard-clustering CS strategy is proposed in \cite{lopez2022statistical}, but no optimality guarantees are provided.

Motivated by these inadequacies, we propose a novel CS method customized for IoT networks with spatially correlated device activations. Specifically, we first introduce a quadratic program (QP) formulation for the CS problem while exploiting the joint activation probabilities among devices. This ensures that devices with a high likelihood of simultaneous activation opt for different channels for transmission, as shown in Fig.~\ref{clustering_types}. Although the CS approach does not restrict the devices to using deterministically the same channels, we prove that such static scheduling is indeed optimal. Specifically, we prove that the optimal solution for QP is attained after the latter is transformed into a hard-clustering problem, i.e., a pure integer QP (PIQP). Fig.~\ref{clustering_types} illustrates the difference between the soft-/hard-clustering in a simple setup. To simplify PIQP further, we transform it into an equivalent pure integer linear program (PILP). 
We employ branch-and-cut (B$\&$C) \cite{basu2021complexity} algorithm to optimally solve PILP. Notably, due to the high computational cost of B$\&$C, we also resort to sub-optimal clustering methods with low computational cost to tackle the CS problem, including K-Medoids \cite{bhat2014k} (as in \cite{lopez2022statistical}) and its variant K-Medoids with K-Means++ \cite{10.5555/1283383.1283494} initialization, and discuss the performance tradeoffs of the different solvers. 
Finally, the performance of our proposal is analyzed numerically, and our findings demonstrate that the CS strategy derived from B$\&$C outperform significantly the one obtained from the sub-optimal clustering methods even when there is a surge in the device correlation. Also, B$\&$C achieves a near-optimal objective value of PILP within a time duration that closely matches the solving duration of sub-optimal clustering approaches, thus evincing the suitability of the former.

The paper is structured as follows. Section ~\ref{SysModProb} presents the system model and introduces the formulation for the CS problem. Section ~\ref{OptiCS} proves that hard-clustering-based scheduling is indeed optimal and presents an optimal CS strategy. Section ~\ref{nume_results} presents the numerical results and some sub-optimal clustering methods to tackle the CS problem. Finally, in Section ~\ref{Concl}, we conclude and highlight some open research areas.

In the remaining of the paper, $\Tr(\cdot)$ signifies the trace of a square matrix, $[\cdot]^T$ denotes the transpose operation, and boldface lowercase/uppercase letters are used to indicate column vectors/matrices.

\vspace{-3mm}
\section{System Model and Problem Formulation} \label{SysModProb}
\vspace{-1mm}
Consider a set ${\mathcal{N}=\{1,2,\dots,N\}}$ of $N$ sporadically active IoT devices that exhibit correlated activation patterns. These devices communicate with the BS using a set ${\mathcal{L}=\{1,2,\dots,L\}}$ of $L$ orthogonal channels, where ${L < N}$. Additionally, assume that the BS knows the pairwise joint activation probability matrix ${\mathbf{A} \in [0,1]^{N \times N}}$, where each element $A_{i_1,i_2}$ represents the joint activation probability of devices ${i_1, i_2 \in \mathcal{N}}$. Notice that $\mathbf{A}$ is symmetric. 

Let, $e_{i,j}$ represent the probability of device ${i \in \mathcal{N}}$ choosing channel ${j \in \mathcal{L}}$, thus ${\sum_{j=1}^L e_{i,j}=1}$. Also, let us denote by ${\mathcal{N}_j \subset \mathcal{N}}$ the set of devices that selected a common channel ${j \in \mathcal{L}}$ in a given time, thus ${\cup_{j \in \mathcal{L}} \mathcal{N}_j = \mathcal{N}}$ and ${\mathcal{N}_{j_1} \cap \mathcal{N}_{j_2} = \varnothing, \ \forall j_1 , j_2 \in \mathcal{L}, \ j_1 \neq j_2}$. In this setup, active devices transmit in a grant-free manner, and collisions between two transmitting devices occur only when they both utilize the same channel. Consequently, we can define the collision probability for the channel ${j \in \mathcal{L}}$ as
\begin{align}
    P_j = 1 - \!\!\!\!\!\!\!\! \prod_{i_1, i_2 \in \mathcal{N}_j , i_1 < i_2} \!\!\!\!\!\!\!\! (1 - A_{i_1,i_2}e_{i_1,j}e_{i_2,j}) ,
\end{align}
and the overall network's average collision probability as
\begin{equation}
    \begin{aligned} \label{expanded_colli_prob}
        P_c & = \frac{1}{L} \sum_{j \in \mathcal{L}} P_j ,\\
        & = \frac{1}{L} \sum_{j \in \mathcal{L}} \Bigl[ \{ \cdot \}_j + \!\! \sum_{i_1 \in \mathcal{N}_j} \sum_{i_2 \in \mathcal{N}_j, i_1 < i_2} \!\!\!\!\!\!\! A_{i_1,i_2}e_{i_1,j}e_{i_2,j} \Bigr] ,
    \end{aligned}
    \vspace{-1mm}
\end{equation}
where $\{ \cdot \}_j, \forall j \in\mathcal{L}$ consists of higher-order terms. Observe that $(\ref{expanded_colli_prob})$ depends entirely on $\mathbf{A}$, $L$, and ${e_{i,j}, \forall i \in\mathcal{N}, \forall j \in\mathcal{L}}$.

The CS objective is to find matrix ${\mathbf{E} = [\mathbf{e}_{1}, \cdots, \mathbf{e}_{N}]^T \in [0,1]^{N \times L}}$, where ${\mathbf{e}_i = [e_{i,1}, \cdots, e_{i,L}]^T}$, that minimizes $P_c$. Such information is later forwarded to the devices by the BS. The following is a formulation for the CS problem
\begin{subequations}\label{opti_problem_Pc}
\begin{align}
\label{obj_Pc} \underset{\mathbf{E}}{\textrm{minimize}} & \ P_c \\
\label{1st__Pc} \textrm{subject to} & \ \sum_{j=1}^{L} e_{i,j} = 1  \textrm{,} \ \forall i \in\mathcal{N} , \\
\label{2nd__Pc} & \ 0 \leq e_{i,j} \leq 1, \forall i \in\mathcal{N}, \forall j \in\mathcal{L} .
\end{align}
\end{subequations}
Constraints $(\ref{1st__Pc})$ and $(\ref{2nd__Pc})$ enforce that the optimization variables are indeed probability values, as per their definition. Furthermore, problem $(\ref{opti_problem_Pc})$ enables devices to select channels probabilistically, making it akin to a soft-clustering problem. However, observe that problem $(\ref{opti_problem_Pc})$ is a non-convex optimization problem, since the objective function $(\ref{obj_Pc})$ is non-convex. This makes problem $(\ref{opti_problem_Pc})$ difficult to solve. The following section will address the hitches in problem $(\ref{opti_problem_Pc})$ without any working assumption on the construction properties of $\mathbf{A}$.
\vspace{-2mm}
\section{Problem Reformulation and Proposed CS} \label{OptiCS}
First, we will transform problem $(\ref{opti_problem_Pc})$ into a QP using the following result proved in Lemma \ref{LeMma1}.
\begin{lemma} \label{LeMma1}
The upper bound on $P_c$ can be expressed as
\vspace{-1mm}
\begin{align} \label{Pc_upper_bound}
    P_c \leq \frac{1}{L} \sum_{j \in \mathcal{L}} \sum_{i_1 \in \mathcal{N}_j} \sum_{i_2 \in \mathcal{N}_j, i_1 < i_2} \!\!\!\!\!\!\! A_{i_1,i_2}e_{i_1,j}e_{i_2,j} = F(\mathbf{E}).
\end{align}
Also, $F(\mathbf{E})$ tightly approaches $P_c$ as $\{ \cdot \}_j \rightarrow 0, \forall j \in\mathcal{L}$ in $(\ref{expanded_colli_prob})$.
\end{lemma}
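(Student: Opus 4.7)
The plan is to reduce the lemma, via a per-channel argument, to a Bernoulli/Weierstrass-type product inequality and then to read off the tightness from the inclusion--exclusion expansion of $P_j$.

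First I would fix an arbitrary channel $j \in \mathcal{L}$ and isolate
\begin{equation*}
P_j \;=\; 1 - \!\!\!\!\!\! \prod_{i_1,i_2 \in \mathcal{N}_j,\, i_1 < i_2} \!\!\!\!\!\!\! \bigl(1 - A_{i_1,i_2}\,e_{i_1,j}\,e_{i_2,j}\bigr).
\end{equation*}
Because $A_{i_1,i_2} \in [0,1]$ and $e_{i_1,j}, e_{i_2,j} \in [0,1]$, each factor $x_{(i_1,i_2)} := A_{i_1,i_2}\,e_{i_1,j}\,e_{i_2,j}$ lies in $[0,1]$. The key auxiliary fact I would invoke (and prove in a single sentence by induction on the number of factors) is the product inequality
\begin{equation*}
\prod_{k} (1 - x_k) \;\geq\; 1 - \sum_{k} x_k, \qquad x_k \in [0,1],
\end{equation*}
which is trivially valid when $\sum_k x_k \geq 1$ (right side non-positive, left side non-negative) and otherwise follows by writing $\prod_{k=1}^{n+1}(1-x_k) = (1-x_{n+1})\prod_{k=1}^{n}(1-x_k)$ and bounding the tail with the inductive hypothesis. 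Applying this to the $x_{(i_1,i_2)}$ and rearranging yields the per-channel bound
\begin{equation*}
P_j \;\leq\; \sum_{i_1 \in \mathcal{N}_j} \sum_{i_2 \in \mathcal{N}_j,\, i_1 < i_2} \!\!\!\! A_{i_1,i_2}\,e_{i_1,j}\,e_{i_2,j}.
\end{equation*}
Averaging over $j \in \mathcal{L}$ then gives $P_c \leq F(\mathbf{E})$, which is the first claim.

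For the tightness statement, I would expand the product by inclusion--exclusion:
\begin{equation*}
P_j \;=\; \sum_{(i_1,i_2)} x_{(i_1,i_2)} \;-\; \sum_{\{(i_1,i_2),(i_3,i_4)\}} \!\!\!\! x_{(i_1,i_2)}\,x_{(i_3,i_4)} \;+\; \cdots
\end{equation*}
The leading (single-sum) term is exactly the per-channel contribution to $F(\mathbf{E})$, while all subsequent terms (quadratic, cubic, and higher in the $x_{(i_1,i_2)}$'s) are precisely the higher-order contribution denoted by $\{\cdot\}_j$ in \eqref{expanded_colli_prob}. Consequently, when $\{\cdot\}_j \to 0$ for every $j \in \mathcal{L}$, we obtain $P_c \to F(\mathbf{E})$, proving tightness.

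The only subtle step is the product inequality itself; everything else is bookkeeping. I expect the main obstacle to be only notational, namely making sure that the residual from the inclusion--exclusion expansion is identified unambiguously with the symbol $\{\cdot\}_j$ used in \eqref{expanded_colli_prob}, so that the tightness claim reads off directly rather than requiring a separate limiting argument.
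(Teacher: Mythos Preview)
Your proof is correct and essentially equivalent to the paper's. The paper phrases the per-channel step probabilistically---writing $P_j$ as the probability of a union of pairwise collision events and applying Boole's inequality (the union bound)---whereas you work algebraically from the product definition of $P_j$ via the Weierstrass inequality $\prod_k(1-x_k)\geq 1-\sum_k x_k$; since the paper's product form already encodes independence of the pairwise events, the two arguments are the same inequality in different clothing, and the tightness discussion via inclusion--exclusion matches the paper's identification of $\{\cdot\}_j$ with the higher-order remainder.
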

\begin{proof}
We begin by expressing $P_j$ as the probability of the union of collision events where each collision event $C_{i_1,i_2}^{(j)}$ corresponds to a collision between device $i_1$ and $i_2$, ${i_1 < i_2}$, ${i_1, i_2 \in \mathcal{N}_j}$, in channel $j$
\begin{align}
    P_j = P \Bigl(\!\!\!\! \bigcup_{i_1, i_2 \in \mathcal{N}_j} \!\!\!\! C_{i_1,i_2}^{(j)}\Bigr) .
\end{align}
Meanwhile, Boole's inequality \cite{SENETA199224}, commonly known as union bound, states that for any finite set of events ${C_{i_1,i_2}^{(j)}, \forall i_1, i_2 \in \mathcal{N}_j}$, the probability of their union is bounded by the sum of their individual probabilities
\begin{align}
    P \Bigl(\!\!\!\! \bigcup_{i_1, i_2 \in \mathcal{N}_j} \!\!\!\! C_{i_1,i_2}^{(j)} \Bigr) \leq \sum_{i_1 \in \mathcal{N}_j} \sum_{i_2 \in \mathcal{N}_j, i_1 < i_2} \!\!\!\!\!\!\! P(C_{i_1,i_2}^{(j)}) .
\end{align}
Since ${P(C_{i_1,i_2}^{(j)}) = A_{i_1,i_2}e_{i_1,j}e_{i_2,j}}$, one obtains
\begin{align} \label{Pj_upper_bound}
    P_j \leq \sum_{i_1 \in \mathcal{N}_j} \sum_{i_2 \in \mathcal{N}_j, i_1 < i_2} \!\!\!\!\!\!\! A_{i_1,i_2}e_{i_1,j}e_{i_2,j} .
\end{align}
Therefore, substituting $(\ref{Pj_upper_bound})$ into $(\ref{expanded_colli_prob})$ yields $(\ref{Pc_upper_bound})$. Furthermore, ${F(\mathbf{E})}$ can serve as a tight upper bound as ${\{ \cdot \}_j \rightarrow 0, \forall j \in\mathcal{L}}$, in $(\ref{expanded_colli_prob})$, a condition achievable only as $N/L$ diminishes. A decrease in $N/L$ leads to a decrease in ${\lvert \mathcal{N}_j \rvert, \forall j \in\mathcal{L}}$, consequently causing a decline in the number of terms within ${\{ \cdot \}_j, \forall j \in\mathcal{L}}$, subsequently diminishing its magnitude.
\end{proof}
This leads to the following CS problem formulation
\begin{subequations}\label{orig_opti}
\begin{align}
\label{obj} \underset{\mathbf{E}}{\textrm{minimize}} & \ F(\mathbf{E}) \\
\label{1st_constraint} \textrm{subject to} & \ \sum_{j=1}^{L} e_{i,j} = 1  \textrm{,} \ \forall i \in\mathcal{N} , \\
\label{2nd_constraint} & \ 0 \leq e_{i,j} \leq 1, \forall i \in\mathcal{N}, \forall j \in\mathcal{L} .
\end{align}
\end{subequations}
Observe that minimizing the objective function $(\ref{obj})$ enforces that devices with a high joint activation probability choose different channels, while those with a low joint activation probability are prone to choose the same channel. 

Let us remodel $(\ref{obj})$ as $F(\mathbf{E}) = \frac{1}{2L} \Tr(\mathbf{E}^{T} \mathbf{A} \mathbf{E})$.
We know that $\Tr(\mathbf{E}^{T} \mathbf{A} \mathbf{E})$ is convex only if $\mathbf{A}$ is positive semi-definite, which may not be the case. Therefore, although optimization tools like successive convex approximation may solve it, they do not guarantee an optimal solution. The performance of successive convex approximation depends on the quality of the initial feasible point. 
Observe that problem $(\ref{orig_opti})$ is a QP comprising $NL$ optimization variables.

The devices may not necessarily utilize the same channels consistently over time. However, the following result proves that the optimal solution for QP can only be attained when it is transformed into a hard-clustering problem, which signifies that such static scheduling is indeed optimal.
\begin{lemma} \label{LeMma2}
The optimal solution of problem $(\ref{orig_opti})$ is discrete, i.e., ${e_{i,j} \in \{0, 1\}, \forall i \in\mathcal{N}, \forall j \in\mathcal{L}}$.
\end{lemma}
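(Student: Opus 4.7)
The plan is to exploit a special structure: although $F(\mathbf{E})$ is quadratic in the full matrix $\mathbf{E}$, the summation in $(\ref{Pc_upper_bound})$ runs only over pairs $i_1<i_2$, so no product of two entries belonging to the same row appears. Hence, if a row index $i^* \in \mathcal{N}$ is singled out and all other rows are held fixed, $F$ reduces to an affine function of $\mathbf{e}_{i^*}$, namely
\begin{align*}
F(\mathbf{E}) = \frac{1}{L}\sum_{j \in \mathcal{L}} \sum_{i \in \mathcal{N},\, i \neq i^*} A_{i^*,i}\, e_{i^*,j}\, e_{i,j} + C(\mathbf{E}_{-i^*}),
\end{align*}
where $C(\mathbf{E}_{-i^*})$ collects the terms that do not involve row $i^*$.

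Once this linearization is in place, I would invoke the elementary fact that an affine function on a polytope attains its minimum at some vertex. The constraints $(\ref{1st_constraint})$--$(\ref{2nd_constraint})$ restrict $\mathbf{e}_{i^*}$ to the probability simplex whose vertices are precisely the $L$ standard basis vectors of $\mathbb{R}^L$, i.e., vectors with a single unit entry and the remaining entries equal to zero. Consequently, the partial minimization over $\mathbf{e}_{i^*}$ always admits an optimizer with $e_{i^*,j} \in \{0,1\}$ for every $j \in \mathcal{L}$.

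The proof then concludes by a row-by-row rounding argument: starting from any global optimum $\mathbf{E}^*$ of $(\ref{orig_opti})$, if row $\mathbf{e}_{i^*}^*$ is not already a vertex of the simplex, replace it by a minimizing vertex of the corresponding partial affine problem. By the previous step this replacement cannot increase $F$, and by global optimality of $\mathbf{E}^*$ it cannot decrease $F$ either, so the updated matrix remains optimal. Iterating over $i^* = 1, \dots, N$ delivers an optimal solution whose entries all lie in $\{0,1\}$.

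The main obstacle is a bookkeeping subtlety: guaranteeing that $F$ is genuinely affine, and not strictly convex quadratic, in a single row. This hinges on the strict inequality $i_1<i_2$ in $(\ref{Pc_upper_bound})$, which suppresses every $e_{i^*,j}^2$ term. Were such self-interactions present with positive weight (as would occur in a careless reading of $\tfrac{1}{2L}\Tr(\mathbf{E}^T \mathbf{A} \mathbf{E})$ with a nontrivial diagonal of $\mathbf{A}$), the single-row restriction would be strictly convex and its minimum on the simplex could sit in the interior, invalidating the vertex conclusion; thus the trace re-expression of $F$ must be read with the diagonal of $\mathbf{A}$ effectively zeroed out, consistent with the physical interpretation that a device never collides with itself.
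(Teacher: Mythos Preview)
Your proof is correct and follows essentially the same line as the paper's: isolate a single device, observe that $F$ is affine in that device's row (no $e_{i,j}^2$ terms appear because the sum in $(\ref{Pc_upper_bound})$ runs over $i_1<i_2$), and conclude that the row-wise minimum over the simplex is attained at a standard basis vector. Your explicit row-by-row rounding is a slightly more careful way to pass from the single-row conclusion to the full matrix than the paper's informal ``this reasoning is valid for any device $i\in\mathcal{N}$'', and your caveat about the diagonal of $\mathbf{A}$ in the trace re-expression is a useful clarification the paper leaves implicit.
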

\begin{proof}
Employing the \textit{proof by contradiction} method, we begin by positing that a soft-clustering strategy, denoted as ${e_{i,j} \in [0, 1), \forall i \in\mathcal{N}, \forall j \in\mathcal{L}}$, minimizes $F(\mathbf{E})$. With this premise in mind, we proceed by extracting terms from $F(\mathbf{E})$ that pertain to a generic device $\hat{i}$,
\vspace{-2mm}
\begin{align*}
    F(\mathbf{E}) = & \frac{1}{L} \Bigl[ \sum_{\substack{i_1=1 \\ i_1 \neq \hat{i}}}^{N} \sum_{\substack{i_2=1 \\ i_2 \neq \hat{i} \\ i_1 < i_2}}^{N} \sum_{j=1}^{L} A_{i_1,i_2} e_{i_1,j} e_{i_2,j} + F'(\mathbf{E}) \Bigr] ,
\end{align*}
\vspace{-3mm}
where
\begin{align*}
    F'(\mathbf{E}) & = \sum_{\substack{i_2=1 \\ \hat{i} < i_2}}^{N} \sum_{j=1}^{L} A_{\hat{i},i_2} e_{\hat{i},j} e_{i_2,j} 
    + \sum_{\substack{i_1=1 \\ i_1 < \hat{i}}}^{N} \sum_{j=1}^{L} A_{i_1,\hat{i}} e_{i_1,j} e_{\hat{i},j} ,\\
    & = \sum_{j=1}^{L} e_{\hat{i},j} \!\!\!\!\! \overbrace{\sum_{i_2=1, \hat{i} < i_2}^{N} \!\!\!\!\!\! A_{\hat{i},i_2} e_{i_2,j}}^{w_1(j)} 
    + \sum_{j=1}^{L} e_{\hat{i},j} \!\!\!\!\! \overbrace{\sum_{i_1=1, i_1 < \hat{i}}^{N} \!\!\!\!\!\! A_{i_1,\hat{i}} e_{i_1,j}}^{w_2(j)} \\
    & = \sum_{j=1}^{L} e_{\hat{i},j} (w_1(j) + w_2(j)) .
    \vspace{-2mm}
\end{align*}
Given constraint $(\ref{1st_constraint})$, $F'(\mathbf{E})$ is minimized when ${e_{\hat{i},j} = 1}$ for $j = {\argmin_{\j} \{w_1(\j)+w_2(\j)\}}$ and ${e_{\hat{i},j} =0}$ otherwise.
This reasoning is valid for any device ${i\in\mathcal{N}}$, since $\hat{i}$ is a generic device.
Hence, our presumption that the soft-clustering strategy could steer us towards the minimum value of $F(\mathbf{E})$ is indeed flawed. As a result, we can conclude that the minimum value of $F(\mathbf{E})$ can only be attained when ${e_{i,j} \in \{0, 1\}, \forall i \in\mathcal{N}, \forall j \in\mathcal{L}}$.
\vspace{-3mm}
\end{proof}
\subsection{Integer Program-based Reformulation}
\vspace{-1mm}
Introducing binary constraints on $e_{i,j}$ converts problem $(\ref{orig_opti})$ into a PIQP. Specifically, constraint $(\ref{2nd_constraint})$ can be substituted by ${e_{i,j} \in \{0, 1\}, \forall i \in\mathcal{N}, \forall j \in\mathcal{L}}$. This means that $e_{i,j}$ denotes the assignment of device $i$ to cluster $j$, while constraint $(\ref{1st_constraint})$ ensures that each device is assigned to exactly one cluster.

Algorithms like B$\&$C and branch-and-bound (B$\&$B) can solve integer programs. Specifically, both algorithms calculate an upper bound on the optimal objective function value by solving the relaxed problem of the corresponding integer program. Importantly, if this relaxed problem happens to be convex, it can yield an ideal upper bound. Conversely, attempting to solve a non-convex relaxation using heuristic methods may result in a mediocre upper bound. 

Recall that the QP relaxation of PIQP can only be convex when matrix $\mathbf{A}$ is positive semidefinite. Therefore, we next transform PIQP into an equivalent PILP, which offers two key advantages: i) the relaxation of a PILP is a linear, thus convex, program (LP), and ii) the convexity of the LP relaxation is independent of the construction properties of $\mathbf{A}$. Bearing this in mind, let us rewrite $F(\mathbf{E})$ with ${\mathbf{E} \in \{0,1\}^{N \times L}}$ as
\begin{align} \label{remaining_objective_equation}
    F(\mathbf{E}) = \frac{1}{L} \sum_{i_1=1}^{N} \sum_{\substack{i_2=1 \\ i_1 < i_2}}^{N} \sum_{j=1}^{L} A_{i_1,i_2} z_{i_1,i_2,j} ,
\end{align}
where ${z_{i_1,i_2,j} = \max \{e_{i_1,j} + e_{i_2,j}-1, 0\}}$, which can be substituted by the following set of linear constraints
\begin{align} \label{remaining_equations}
    \begin{array}{c}
        \forall i_1,i_2 \in\mathcal{N}, \\ 
        i_1 < i_2, \\ 
        \forall j \in\mathcal{L} 
        \end{array} & \begin{cases} 
            e_{i_1,j} + e_{i_2,j} - z_{i_1,i_2,j} \leq 1 , \\
            z_{i_1,i_2,j} \geq 0 , \\
            z_{i_1,i_2,j} + y_{i_1,i_2,j} \leq e_{i_1,j} + e_{i_2,j} , \\
            z_{i_1,i_2,j} - y_{i_1,i_2,j} \leq 0 , \\
            y_{i_1,i_2,j} \in \{ 0, 1 \}.
        \end{cases}
\end{align}
Therefore, the PIQP can be transformed into the following PILP involving ${LN^2}$ optimization variables
\begin{align} \label{transformed_opti}
    \begin{split}
        \underset{\mathbf{E}, \ z_{i_1,i_2,j}, \ y_{i_1,i_2,j}
        }{\textrm{minimize}} & (\ref{remaining_objective_equation}) \\
        \textrm{subject to} \ \quad & (\ref{1st_constraint}) , (\ref{remaining_equations}), \\
        & e_{i,j} \in \{0, 1\},  \forall i \in\mathcal{N},  \forall j \in\mathcal{L} ,
    \end{split}
\end{align}
which can be solved using B$\&$B or B$\&$C algorithms.
%
\vspace{-3mm}
\subsection{PILP Solvers and Insights}
We define the complexity of B$\&$B and B$\&$C algorithms in terms of the number of offspring relaxed subproblems they solve to achieve the optimal solution. The complexity of B$\&$B scales exponentially with the number of optimization variables increases, which is motivated by the fact that the upper and lower bounds of the optimal objective value, determined during the B$\&$B run, are not sufficiently tight to discard a substantial proportion of offspring subproblems. Meanwhile, the complexity of B$\&$C is inherently lower due to its ability to generate exceptionally tight bounds for the optimal objective value by employing the cutting plane method during its B$\&$B procedure. This tightness facilitates the elimination of a considerable number of offspring subproblems. Hence, when confronted with problems involving a large number of optimization variables, B$\&$C proves to be less time-consuming than B$\&$B. Still, B$\&$C may be computationally intense, thus motivating the exploration of sub-optimal but low-complexity clustering methods in the next section.

{
\setlength\arrayrulewidth{1pt}
\begin{table}[!t]
\caption{Sub-optimal Clustering Methods \label{SOptiClus}}
\vspace{-2mm}
\centering
\begin{tabular}{p{1.29cm}@{\hskip 0.1in} p{6.7cm}}
\hline
\textbf{Method} & \textbf{Summary} \\
\hline
K-Medoids & Uses medoid, a cluster member, as cluster representative, where Initial medoids are selected uniformly at random. Then, an iterative process begins where non-medoid devices are assigned to the existing medoids, and a new medoid is determined. Finally, the process ends when medoids no longer change. \\
K-Medoids with K-Means++ & K-Means++ initialize the medoids for the K-Medoids by randomly selecting the first medoid and subsequent medoids based on probability. The K-Medoids then refine the medoids and form clusters. \\
\hline
\end{tabular}
\vspace{-6mm}
\end{table}
}

\vspace{-2mm}
\section{Performance Evaluation} \label{nume_results}
Consider a circular region where the devices are uniformly distributed with a fixed density of $0.2$ devices$/$m$^2$. As a result, we adjust the radius of this area to accommodate the desired number of devices. We leverage Gurobi solver \cite{nonconvex_quadratic_program} to implement B$\&$C. Moreover, B$\&$C terminates its run when the gap between the lower and upper bound of the optimal objective value, determined during its run, is less than $1 \%$.

We compute $\mathbf{A}$ using the alarm generation approach, where we continuously generate alarms over $10^6$ time steps and collect data on active devices. After its completion, we compute $\mathbf{A}$, where $A_{i_1,i_2}$, ${\forall i_1, i_2 \in \mathcal{N}}$, ${i_1<i_2}$, represents the probability of device $i_1$ and $i_2$ became active simultaneously in the recently concluded alarm generation procedure. Notably, an alarm is triggered at a randomly chosen location, known as the epicenter, at each time step. This alarm activates a random set of devices based on the activation probability function ${f(d_i) = e^{-d_i/\lambda}}$, where $d_i$ is the distance of device $i$ from the epicenter, and $\lambda = 3$ serves as a mean-scaling multiplier. 
\vspace{-2mm}
\subsection{Sub-optimal Clustering Methods}
Numerous approaches have been put forward in the extensive literature on clustering methods, each possessing its strengths and limitations. For the device clustering scenario considered in problem $(\ref{transformed_opti})$, we exploit the K-Medoids method and its variant K-Medoids with K-Means++ initialization. These methods are briefly described in Table \ref{SOptiClus}.
\begin{figure}[!t]
\centering
\vspace{-2mm}
\includegraphics[width=3in]{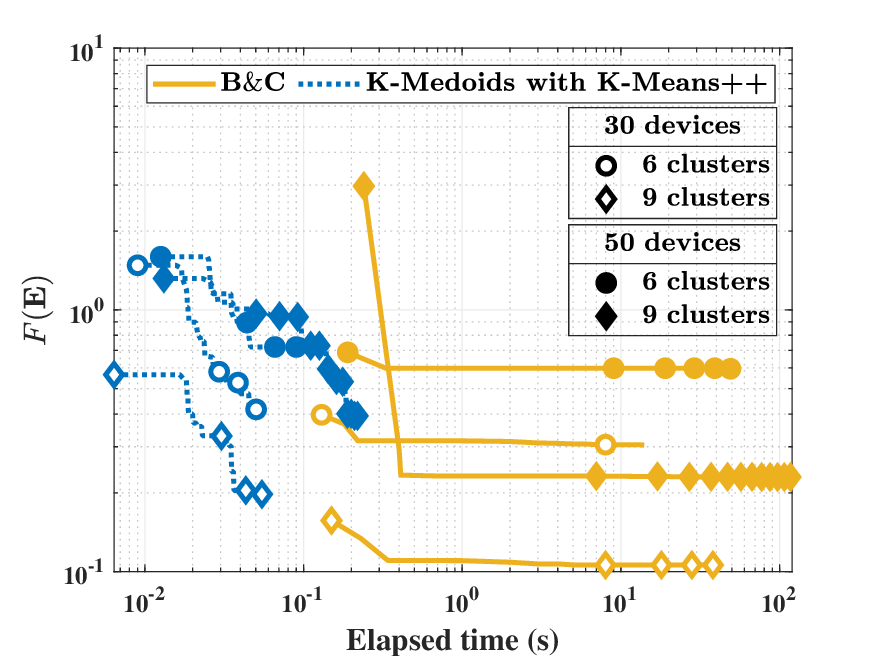}
\vspace{-2mm}
\caption{Logs from various algorithms showcasing the best-known objective value over time for problem $(\ref{transformed_opti})$.}
\label{objective_vs_time}
\vspace{-3mm}
\end{figure}
\vspace{-3mm}
\subsection{Numerical Results}
\vspace{-1mm}
Fig. \ref{objective_vs_time} showcases the log data from the B$\&$C and K-Medoids with K-Means++ algorithms, depicting the best-known objective value's progression over time for the problem $(\ref{transformed_opti})$. Notably, B$\&$C achieves a near-optimal objective value within a time duration that closely matches the solving duration of K-Medoids with K-means++ for all combinations of $N$ and $L$. Moreover, B$\&$C is computationally more intense than K-Medoids with K-means++, as numerous offspring relaxed subproblems have to be solved in B$\&$C. Consequently, the solving duration of B$\&$C would be higher than K-Medoids with K-means++, as visible in Fig. \ref{objective_vs_time}. However, the solution obtained from B$\&$C is superior.

Fig. \ref{Pc_vs_F(E)} validates $F(\mathbf{E})$ as an upper bound of $P_c$. The visual representation in Fig. \ref{Pc_vs_F(E)} clearly demonstrates that for all values of $L$, the relationship $P_c \leq F(\mathbf{E})$ holds. Additionally, as $N/L$ decreases, $F(\mathbf{E})$ becomes an increasingly accurate approximation of $P_c$. This enhanced precision arises from the fact that ${\{ \cdot \}_j, \forall j \in\mathcal{L}}$, in $(\ref{expanded_colli_prob})$ decreases with a decrement in $N/L$. Meanwhile, Fig. \ref{Pc_vs_F(E)} also illustrates that the increment in $L$ makes it feasible to achieve remarkably low $P_c$ through the solution obtained by solving the problem $(\ref{transformed_opti})$ using B$\&$C.

Fig. \ref{collisionProbability_vs_devices} illustrates a comparison in terms of $P_c$ between the discussed solvers for problem $(\ref{transformed_opti})$. Regardless of $N$ and $L$, $P_c$ obtained from B$\&$C consistently outperforms the ones obtained from K-Medoids with K-Means++. Additionally, a crucial observation from Fig. \ref{collisionProbability_vs_devices} is that increasing the value of $L$ decreases $P_c$ obtained from both B$\&$C and K-Medoids with K-Means++. However, K-Medoids with K-Means++ exhibits a diminishing marginal benefit from increasing $L$, wherein the positive impact on $P_c$ diminishes with each successive increment of $L$. Consequently, $P_c$ from K-Medoids with K-Means++ never reaches $10^{-2}$. 

Fig. \ref{correlation_vs_N} illustrates the performance of the discussed solvers across various levels of correlation among device activation. According to $f(d_i)$, device activation probability increases with an increase in $\lambda$, which in turn boosts the aforesaid correlation. As anticipated, Fig. \ref{correlation_vs_N} distinctly illustrates an increase in $\lambda$ leads to a noticeable rise in $P_c$ for both solvers. Furthermore, it is noteworthy that, irrespective of the value of $\lambda$, $P_c$ obtained from B$\&$C consistently outperforms the ones obtained from K-Medoids with K-Means++. However, a significant observation from Fig. \ref{correlation_vs_N} is the diminishing relative advantage of B$\&$C over K-Medoids with K-Means++ as $\lambda$ continues to increase.

\begin{figure}[!t]
\centering
\vspace{-2mm}
\includegraphics[width=3in]{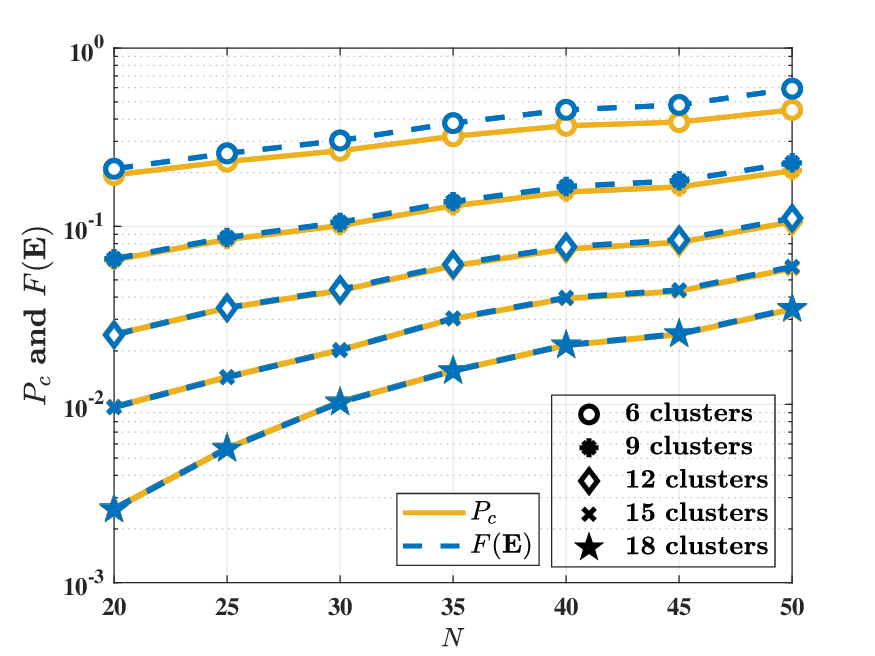}
\vspace{-2mm}
\caption{$P_c$ and $F(\mathbf{E})$, obtained by solving problem $(\ref{transformed_opti})$ using B$\&$C, as a function of $N$.}
\label{Pc_vs_F(E)}
\vspace{-3mm}
\end{figure}

\begin{figure}[!t]
\centering
\vspace{-2mm}
\includegraphics[width=3in]{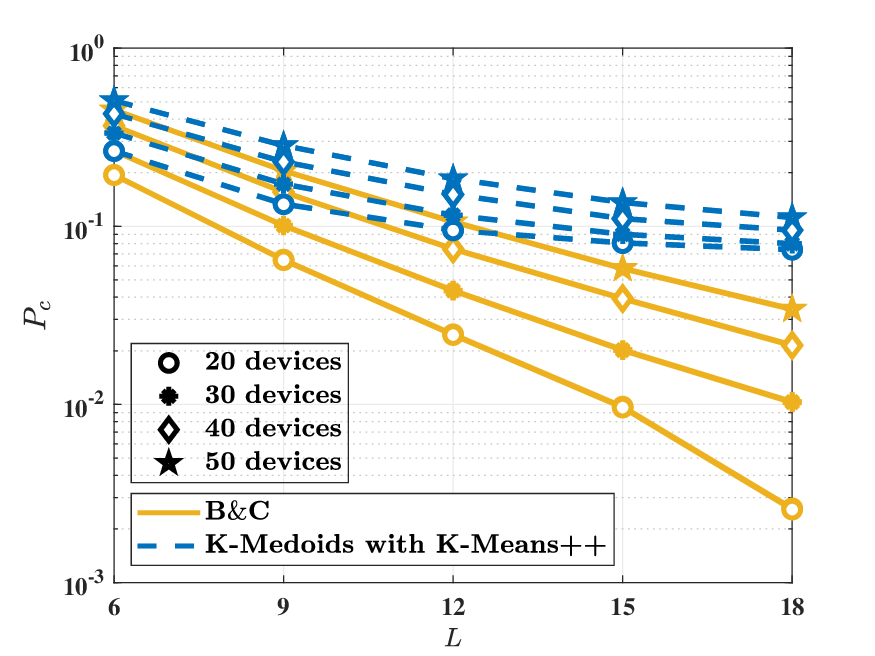}
\vspace{-2mm}
\caption{$P_c$, obtained by solving problem $(\ref{transformed_opti})$ using B$\&$C and K-Medoids with K-Means++, as a function of $L$.}
\label{collisionProbability_vs_devices}
\vspace{-5mm}
\end{figure}

\section{Conclusion} \label{Concl}
\vspace{-1mm}
This letter presents a novel CS method for IoT applications involving spatially correlated device activation. First, we introduce a QP formulation for the CS problem, which considers joint activation probabilities among devices. We then establish the compelling insight that solving QP optimally necessitates its transformation into a hard-clustering problem called PIQP. This PIQP is subsequently transformed into a PILP. 
To optimally solve PILP, we leverage the B$\&$C algorithm. Given the high computational cost of B$\&$C, we also discuss some sub-optimal clustering methods to tackle the CS problem. 

The problem $(\ref{transformed_opti})$ addressed in this study necessitates prior specification of the number of clusters. Hence, an intriguing avenue for future research is to develop a clustering optimization problem that eliminates the need to specify the number of clusters beforehand, followed by its subsequent solution.
Another compelling research avenue lies in performing intra-cluster scheduling following the device clustering based on $\mathbf{A}$. Intra-cluster scheduling ensures that not all simultaneously active devices within the same cluster need to transmit. Notably, the cluster head oversees the intra-cluster scheduling.

\begin{figure}[!t]
\centering
\vspace{-2mm}
\includegraphics[width=3in]{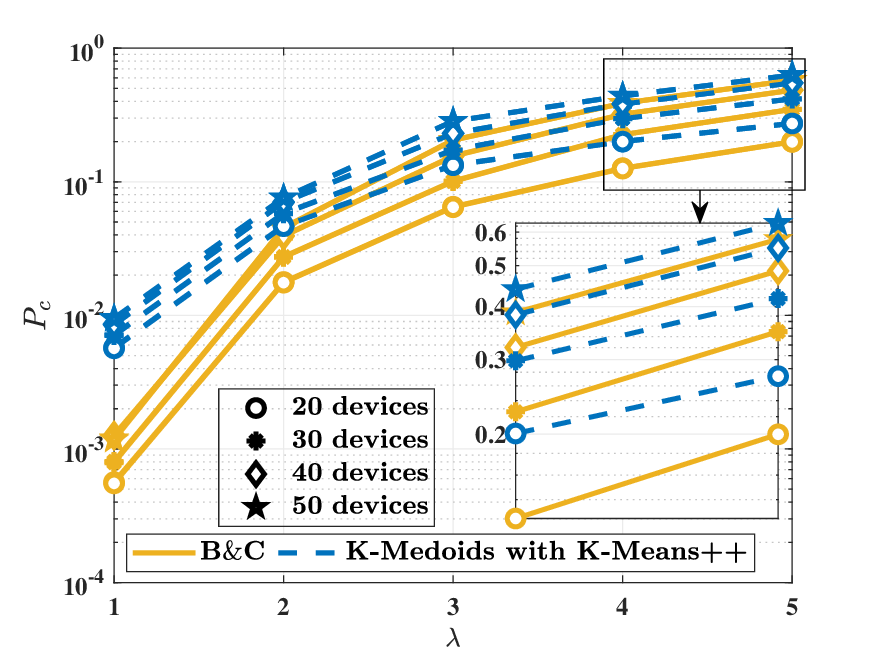}
\vspace{-2mm}
\caption{$P_c$, obtained by solving problem $(\ref{transformed_opti})$ using B$\&$C and K-Medoids with K-Means++, as a function of $\lambda$ for $L = 9$.}
\label{correlation_vs_N}
\vspace{-5mm}
\end{figure}


\bibliographystyle{IEEEtran}
\bibliography{IEEEabrv,references}

\newpage






\vfill

\end{document}